\documentclass[envcountsame]{llncs}

 \newtheorem{obs}{Observation}

\def \qedbox{\hfill\vbox{\hrule\hbox{\vrule
 height1.3ex\hskip0.8ex\vrule}\hrule}}

\usepackage{amsmath}
  \usepackage{epsfig}

  \renewcommand*{\Pr}{\mathop{\mathrm{Prob}}}

  \def \qedbox{\hfill\vbox{\hrule\hbox{\vrule height1.3ex\hskip0.8ex\vrule}\hrule}}
  \newfont{\MSB}{msbm10 scaled\magstep1}
  
  \newcommand{\goesto}{\rightarrow}

\begin{document}
\title{Adversarial Scheduling Analysis of Game-Theoretic Models of Norm Diffusion}

\author{Gabriel Istrate\inst{1}\thanks{corresponding author. Email: gabrielistrate@acm.org}\and Madhav V.~Marathe \inst{2} \and S.S.~Ravi\inst{3}}
\institute{e-Austria Institute, V.P\^{a}rvan 4, cam. 045B, Timi\c{s}oara RO-300223, Romania \and Network Dynamics and Simulation Science Laboratory,
Virginia Tech, 1880 Pratt Drive Building XV, Blacksburg, VA 24061. Email: \email{mmarathe@vbi.vt.edu} \and Computer Science Dept., S.U.N.Y. at Albany,
  Albany, NY 12222, U.S.A. Email:\email{ravi@cs.albany.edu}}


 \maketitle

\begin{abstract}
In \cite{adversarial-soda} we advocated the investigation of robustness of results in the theory
of learning in games under adversarial scheduling models. We provide
evidence that such an analysis is feasible and can lead to nontrivial
results by investigating, in an adversarial scheduling setting, Peyton
Young's model of diffusion of norms \cite{peyton-young-book}. In 
particular, our main result incorporates {\em contagion} into Peyton Young's  model.  
\end{abstract} 

{\bf Keywords:} evolutionary games, adversarial scheduling, discrete Markov chains.

  \section{Introduction}\label{s:intro}
Game-theoretic equilibria are {\em steady-state properties};
that is, given that all the players' actions correspond to an
equilibrium point it would be irrational for any of them to deviate
from this behavior when the others stick to their strategy. The fundamental 
problem facing this type of concept is that it does not predict
{\em how players arrive at this equilibrium in the first place}, or
how they ``choose'' one such equilibrium, if several such points
exist.  The theory of equilibrium selection of 
Hars\'{a}nyi and Selten \cite{harsanyi:selten} 
assumes some form of prior coordination between players, in the form of
a {\em tracing procedure}. This strong prerequisite is often unrealistic. 

The theory of {\em learning in games} \cite{FL99} attempts to explain
the emergence of equilibria as the result of an evolutionary
``learning'' process. Models of this type assume one (or several)
populations of {\em agents}, that interact by playing a certain
game, and updating their behavior based on the outcome of this
interaction. 

Results in evolutionary game theory are important not necessarily as
realistic models of strategic behavior. Rather, they provide
possible explanations for experimentally observed features of
real-world social dynamics. For instance, the fundamental insight behind the
concept of {\em stochastically stable strategies} is that continuous
``noise'' (or small deviations from rationality) can provide a
solution to the equilibrium selection problem in game theory. 
discussion on the role of strategic learning in equilibrium selection
see \cite{peyton-young-strategic})
Similar issues apply when
mathematical modeling is replaced with computer experiments, 
in the
area of {\em agent-based social simulation} \cite{gilbert-troizch}. Epstein
\cite{epstein-generative-book} (see also \cite{axtell-epstein}) has
advocated a generative approach to social science: in order to better
understand a given phenomenon one should be able to generate it via
simulations.

Given that such mathematical models or simulations are emerging  as tools for 
policy-making (see e.g. \cite{transims-iccs,epstein-smallpox}), how can
we be sure that the conclusions that we derive from the output of the
simulation do not crucially depend on the particular assumptions and features we
embed in it ? Part of the answer is that these results have to display ``robustness'' with respect to the various idealizations inherent in the model, be it
mathematical or computational.

Various issues that might impact the robustness of the conclusions
have been previously considered in the game-theoretic literature; for
instance, the celebrated result of Foster and Young
\cite{foster:young:90} can be viewed as investigating the robustness
of Nash equilibria with respect to the introduction of small amounts
of random noise (or player mistakes). 

In this paper we are only
concerned with one such issue: {\em scheduling}, i.e.the order in
which agents get to update their strategies. Two alternatives are most
popular, both in the mathematical and the computer simulation
literature: in the {\em synchronous mode} ({\em every} player updates
at every step. A popular alternative is {\em uniform matching}. Models of
the latter type assume an underlying (hyper)graph topology (describing
the sets of players allowed to simultaneously update in one step as a
result of game playing) and choose a (hyper)edge uniformly at random
from the available ones. Employing a uniform matching model in multiagent models of social
systems is unrealistic for it assumes {\em perfect} and {\em global}
randomness; it is not clear whether this assumption is waranted in the
``real life'' situations that the theory is supposed to model.
Indeed, notwithstanding the question regarding the existence of
computational randomness in nature, the structure of social
interactions is neither random, nor uniform, and comprises many
regular, ``day by day'' interactions, as well as a smaller number of
``occasional'' ones. 
A random matching model does not take into
account {\em locality} and cannot, therefore, adequately model
``contagion'' effects (i.e. players becoming activated as a result of
some of their neighbors doing so)
. On the other hand, social systems
are {\em inherently distributed}, and it is not clear whether the
assumption of {\em global} randomness is reasonable in a simulation
setting. 

We investigate in an adversarial setting Peyton Young's model of evolution of norms
\cite{peyton-young-book} (see also \cite{peyton-young-inovations}). The dynamics
 models an important aspect of social networks, the emergence of {\em
conventions}, and has been proposed as an evolutionary justification
for the emergence of certain rules in the pragmatics of natural
language \cite{vanrooy:horn}. Our results can be summarized as follows: results on selection of
strict-dominant equilibria under random noise extend
(Theorem~\ref{adversarial-young-nonadaptive}) to a class of
nonadaptive schedulers. However, such an extension fails
 for adaptive schedulers,
even those with fairness properties similar to those of a random
scheduler. Our main result
(Theorem~\ref{young-contagion}) extends the convergence to the
strictly-dominant equilibrium to a class of ``nonmalicious'' adaptive schedulers
that models {\em contagion} and has a certain {\em
reversibility} property (the class of such schedulers includes the
random scheduler as a special case). However
 for this class of schedulers the {\em
convergence time} is {\em not} necessarily the one from the case of
random scheduling.

Besides the relevance of our results to evolutionary game-theory, we hope that the concepts and techniques relevant to this paper can be fruitfully exploited in the theory of {\em rapidly mixing Markov chains}, of great interest in Theoretical Computer Science. 

Not surprisingly, our
framework is related to the {\em theory of self-stabilization
of distributed systems} \cite{dolev:book}.  Our proofs highlight some
principles and techniques of this theory ({\em the existence of a winning
strategy for scheduler-luck games} \cite{dolev:book}, {\em
monotonicity} and {\em composition of winning strategies}) that can
be applied to the particular problem we study, and conceivably in more general
settings as well.

\section{Preliminaries}

A general class of models for which adversarial analysis can be naturally considered is that of
{\em population games} \cite{blume-population-games}. Systems
of interest in this class consist of a number of {\em agents}, defined as the
vertices of a hypergraph $H=(V,E)$. One edge of this hypergraph
represents a particular choice of all players who can play (one or
more simultaneous instances of) a game $G$ that defines the
dynamics. Each player has a {\em state} (generally a mixed strategy of
$G$) chosen from a certain set $S$. The global state of the system is an element of $\overline{S}=S^{V}$. The dynamics proceeds by choosing
one edge $e$ of $H$ (according to a scheduling mechanism that is 
specified by the scheduler), letting the agents in $e$ play the game, and
updating their states as a result of game playing.

\subsection{Schedulers}

Denote by $X^{*}$ the set of finite words over alphabet $X$.

\begin{definition}
A {\em deterministic scheduler} is specified by a mapping
$f:E^{*}\times \overline{S} \goesto E$,  where $E$ is the set of edges of
$H$ and $\overline{S}$ is the set of possible states of the system.
Mapping $f$ specifies the next scheduled element, given the 
current history. Let $b\geq 1$. A scheduler that can choose one item
  among a set of $m$ elements is {\em (worst-case)
  $b$-fair} if every agent is guaranteed to be scheduled at least once
in any sequence of $b(m-1)+1$ consecutive steps.
\end{definition}

One particularly restricted class of schedulers is that of {\em
non-adaptive} schedulers, corresponding to updates of the nodes/edges
according to a fixed permutation, independent of the initial state of
the system.

  The above definitions are well-suited for {\em deterministic
  schedulers}.  They are {\em not} well-suited for probabilistic
  schedulers (such as random matching), since for any fixed number of
  steps $B$ with positive probability it will take more than $B$ steps
  to schedule each element at least once. Also, the definitions do not
  allow for multiple agents to be scheduled simultaneously. Therefore
  in this case we need to employ slightly different definitions.
 
  \begin{definition}
  A {\em (probabilistic) scheduler} assigns a probability distribution
  $p_{w,s}$ on $E$ to each pair $(w,s,s_{0})$ consisting of initial prefixes
  $w\in E^{*}, s\in \overline{S}^{*}$  with $|w|=|s|$ and starting state $s_{0}\in \overline{S}$. The next element $e\in E$ to be
  scheduled, given prefixes $w,s$ and initial state $s_{0}$, is sampled from $E$ according to $p_{w,s,s_{0}}$.

  A {\em non-adaptive} probabilistic scheduler is specified by (a)
 a collection (multiset) $\Sigma=\{{\cal D}_{1}, \ldots, {\cal
D}_{m}\}$ of probability distributions on the set $E$ such that every
$x\in E$ belongs to the support of some distribution ${\cal D}_{i}$
and (b) a fixed permutation $\pi$ of $\Sigma$. The scheduler proceeds by (possibly concurrently) scheduling elements
of $E$ sampled from a distribution from $\Sigma$ chosen according to
(consecutive repetitions of) permutation $\pi$. For $C>0$, a non-adaptive probabilistic scheduler is {\em $C$
individually-fair} if for every $x\in E$, the probability that $x$ is
scheduled during one round of $\pi$ is at least $C/|E|$.
\end{definition}

 One can define, for any given triple $(w,s,s_{0})$, where $w\in E^{*},s\in \overline{S}^{*}$ and
 $s_{0}\in \overline{S}$, a probability $\pi_{w,s,s_{0}}$, the probability that, starting
 from state $s_{0}$ the scheduler uses $w$ as the initial prefix of its
 schedule and evolves its global state according to string $s$. Let $\Omega$ denote the resulting probability space.  We divide each
  trajectory of a probabilistic scheduler into {\em rounds}: the first
  round is the smallest initial segment that schedules each element of
  $E$ at least once, the second round is the smallest segment starting
  at the end of the first round that schedules each element at least
  once, and so on.  Given this convention, it is easy to see that for
  any $k>0$ and $s\in \overline{S}$ the family $W_{k}$ of strings $w$ consisting
  of {\em exactly $k$ rounds} realizes a complete partition of the
  probability space $\Omega$, i.e. $\sum_{w\in W_{k}} \pi_{w,s} = 1$. 

  \begin{definition}
  If $f(\cdot)$ is a function on integers, we say that a family of
  probabilistic schedulers, indexed by $n$, the cardinality of the set
  $E$, is {\em $O(f(n))$-fair w.h.p.} if there exists a monotonically
  decreasing function $g:(0,\infty)\goesto (0,1)$, with
  $\lim_{\epsilon \goesto \infty} g(\epsilon)=0$ such that for every
  state $s\in \overline{S}$, denoting by $l_{i}$ the random variable measuring
  the length of the $i$'th round, we have $\underline{\lim}_{n\goesto \infty} \Pr[ l_{i} > \epsilon \cdot f(n)]< g(\epsilon)$. 
  \end{definition}

Random scheduling can be specified by a non-adaptive probabilistic
 scheduler whose set $\Sigma$ consists of just one distribution,
 namely the uniform distribution on $E$.  This scheduler is
 1-individually fair and, by the well-known Coupon Collector Lemma
 it is also $O(n\log(n))$-fair w.h.p.

\subsection{Peyton Young's model of norm diffusion}

 The setup of this model is the following: agents
located at the vertices of a graph $G$ interact by playing a
two-person symmetric game with payoff matrix $M=(m_{i,j})_{i,j\in
\{{\bf A},{\bf B}\}}$ displayed in Figure~\ref{matrix}. It is assumed
that strategy {\bf A} is a so called {\em strict risk-dominant
equilibrium}. 
That is, we have $a-d>b-c>0$. Each undirected edge $\{i,j\}$ has a positive weight
$w_{ij}=w_{ji}$ that measures its ``importance''. When scheduled,
agents play (using the same strategy, identified as {\em the agent's
state}) against each of their neighbors. If agent $i$ is the one to update,
$\overline{x}$ is the joint profile of agents' strategies, and $z\in
\{ {\bf A}, {\bf B}\}$ is the candidate new state, $
p^{\beta}(x_{i}\goesto z | \overline{x}) \sim e^{\beta \cdot
\nu_{i}(z,\overline{x}_{-i})}$, where $\nu_{i}(z,\overline{x}_{-i} )$, the payoff of the $i$'th agent should he play
strategy $z$ while the others' profile remains the same is given by
$
\nu_{i}(z, \overline{x}_{-i}) = \sum_{(i,j)\in E} w_{ij}m_{z,x_{j}}$. 
Under random scheduling, the process we defined is a variant of the best-response dynamics. This latter process (viewed as a Markov chain) 
is not ergodic. Indeed, the since in game $G$ it is always better to play the same strategy as your partner, the dynamics has at least two fixed points,  states ``all {\bf A}" and ``all {\bf B}". 

\begin{figure}
\begin{center}
{\small
\begin{tabular}{||c|c|c||}\hline 
strategies & {\bf A} & {\bf B} \\ \hline  {\bf A} & a,a & c,d \\
\hline {\bf B} & d,c & b,b \\ \hline 
\end{tabular}
}
\caption{Payoff matrix}
\label{matrix}
\end{center}
\end{figure}

An important property of Peyton Young's dynamics is that it corresponds to a {\em potential game:} there exists a function $\rho:\overline{V}\goesto {\bf R}$ such that, for any player $i$, 
any possible actions $a_{1},a_{2}$ of player $i$, and any action profile $\overline{a}$ of the other players, $u_{i}(a_{1},a)-u_{i}(a_{2},a)= \rho(a_{1},a)-\rho(a_{2},a)$ (where $u_{i}$ is the utility function of player $i$). In other words changes in utility as a result of strategy update correspond to changes in a global potential function. An explicit potential is given by $\rho^{*}(x)=\sum_{(h,k)\in E} w_{h,k}m_{x_{h},x_{k}}$. 

\subsection{Stochastic stability} 
A fundamental concept we are dealing with is that of 
a {\em stochastically stable state} for
dynamics described by a Markov chain.

\begin{definition}\label{def:perturbed-markov}
Consider a Markov process $P^{0}$ defined on a finite state space $\Omega$.
For each $\epsilon > 0$, define a Markov process $P^{\epsilon}$ on $\Omega$.
$P^{\epsilon}$ is a {\em regular perturbed Markov process} if all
of the following conditions hold.
\begin{itemize}
\item $P^{\epsilon}$ is irreducible for every $\epsilon > 0$.
\item For every $x,y\in \Omega$, $
\lim_{\epsilon > 0} P_{xy}^{\epsilon} = P_{xy}^{0}$. 
\item If $P_{xy}>0$ then there exists $r(m)>0$, the {\em resistance of
transition $m=(x\goesto y)$}, such that as $\epsilon \goesto 0$,
$P_{xy}^{\epsilon}= \Theta(\epsilon^{r(m)})$.
\end{itemize}

Let $\mu^{\epsilon}$ be the (unique) stationary distribution of
$P^{\epsilon}$. A state $s$ is {\em stochastically stable } if $\underline{\lim}_{\epsilon \goesto 0}\mbox{  } \mu^{\epsilon}(s) > 0$.
\end{definition}

Peyton Young's model of diffusion of norms can be easily recast into the 
framework of Definition~\ref{def:perturbed-markov}, by defining  $\epsilon = exp(-\beta)$.

\begin{obs} 
Peyton Young's model of diffusion of norms can be recast into the 
framework of Definition~\ref{def:perturbed-markov}. 
Let $\epsilon = exp(-\beta)$. As $\beta \goesto \infty$,
$\epsilon \goesto 0$. Consider now the Markov chain $\Gamma_{\epsilon}$
corresponding to
the original dynamics. It has transition matrix
$D_{\epsilon}=D_{1,\epsilon}\ldots D_{m,\epsilon}$, where
$D_{i,\epsilon}=(d_{i,k,l}^{\epsilon})$ is the transition matrix
corresponding to scheduling (and updating) a node according to
${\cal D}_{i}$. It is easy to see that $\lim_{\epsilon \goesto 0} d_{k,l}^{\epsilon} = d_{k,l}$. Moreover, by the nature of the dynamics, as $\epsilon \goesto 0$
each element of $D_{i,\epsilon}$ is either zero (in case the state
transition $k\goesto l$ cannot be realized by updating any single
node member of $supp({\cal D}_{i})$), tends to a positive
constant which is the probability that the node corresponding to the
transition $k\goesto l$ is chosen (in case the transition
$k\goesto l$ corresponds to a ``best reply" move), or tends to
zero, asymptotically like $\Theta(\epsilon^{r_{i,k,l}})$, for some
$r_{i,k,l}>0$ (otherwise).
\end{obs}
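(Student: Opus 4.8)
The plan is to verify directly the three defining conditions of Definition~\ref{def:perturbed-markov} by reading off the asymptotics of the log-linear update rule. First I would write the update probability explicitly: when node $i$ is scheduled in profile $\overline{x}$, the logit rule gives
\[
p^\beta(x_i \goesto z \mid \overline{x}) = \frac{e^{\beta \nu_i(z,\overline{x}_{-i})}}{e^{\beta \nu_i({\bf A},\overline{x}_{-i})}+e^{\beta \nu_i({\bf B},\overline{x}_{-i})}}.
\]
Substituting $\epsilon = \exp(-\beta)$ and assuming, without loss of generality, $\nu_i({\bf A},\overline{x}_{-i}) \geq \nu_i({\bf B},\overline{x}_{-i})$, dividing numerator and denominator by $e^{\beta \nu_i({\bf A},\overline{x}_{-i})}$ yields
\[
p^\beta(x_i \goesto {\bf A}) = \frac{1}{1+\epsilon^{\Delta}}, \qquad p^\beta(x_i\goesto {\bf B})=\frac{\epsilon^{\Delta}}{1+\epsilon^{\Delta}},
\]
where $\Delta = \nu_i({\bf A},\overline{x}_{-i}) - \nu_i({\bf B},\overline{x}_{-i}) \geq 0$ is the payoff gap. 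This single computation produces the trichotomy: the best-reply move tends to the positive constant $1$ (or $1/2$ when $\Delta = 0$), while the ``mistake'' move is $\Theta(\epsilon^{\Delta})$, exhibiting resistance $r = \Delta$, which is strictly positive whenever the current state is not a best reply.

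Next I would lift this to the single-distribution matrix $D_{i,\epsilon}$. Since one application updates a single node $j$ drawn from ${\cal D}_i$, a transition $k \goesto l$ in which $l$ differs from $k$ in two or more coordinates is unrealizable, so $d_{i,k,l}^\epsilon \equiv 0$; a transition differing exactly in coordinate $j$ contributes ${\cal D}_i(j)\cdot p^\beta(x_j \goesto l_j \mid k)$, whose order is given by the preceding paragraph; and the diagonal entry is a sum of no-change probabilities, hence a sum of terms each of which again falls into one of the two nonzero cases. Because all coefficients involved (the masses ${\cal D}_i(j)$ and the limiting logit probabilities) are positive, each entry of $D_{i,\epsilon}$ is either identically zero, tends to a positive constant, or is $\Theta(\epsilon^{r_{i,k,l}})$ for some $r_{i,k,l}>0$, exactly as the statement asserts.

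The three conditions of Definition~\ref{def:perturbed-markov} then follow. Irreducibility of $P^\epsilon = D_\epsilon$ for each fixed $\epsilon > 0$ holds because at finite $\beta$ every single-node flip has strictly positive probability and every vertex lies in the support of some ${\cal D}_i$, so the transition graph of a bounded number of rounds is strongly connected. Convergence $\lim_{\epsilon\goesto 0} d^\epsilon_{k,l} = d_{k,l}$ follows because each factor $D_{i,\epsilon}$ converges entrywise and the entries of a finite matrix product are polynomial in the factor entries. For the resistance condition I would use that, up to lower-order corrections, every entry of each $D_{i,\epsilon}$ has the form $c\,\epsilon^{s}$ with $c>0$ and $s \geq 0$; multiplying the $m$ factors, each entry of $D_\epsilon$ becomes a finite sum $\sum_p c_p \epsilon^{s_p}$ indexed by update paths $p$, whose dominant behaviour is governed by the smallest exponent $\min_p s_p$.

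The main obstacle is precisely this last point: one must guarantee that the entry is $\Theta(\epsilon^{\min_p s_p})$ and not merely $O(\epsilon^{\min_p s_p})$, i.e.~that the leading coefficients do not cancel. This is where the positivity built into the model is essential. Each $c_p$ is a product of strictly positive quantities, so no cancellation can occur among terms sharing the minimal exponent, and the resistance $r(k\goesto l) = \min_p s_p$ is well-defined, finite, and strictly positive exactly on those transitions for which $d_{k,l}=0$. Verifying this no-cancellation claim carefully (in particular checking that non-commensurate real exponents pose no difficulty, since the minimum of a finite set of reals is still attained) completes the recasting of Peyton Young's dynamics as a regular perturbed Markov process. \qedbox
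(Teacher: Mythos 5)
Your proposal is correct and follows essentially the same route as the paper, which presents this Observation only as a sketch: classify the entries of each factor $D_{i,\epsilon}$ via the asymptotics of the log-linear update rule (zero, positive constant, or $\Theta(\epsilon^{r})$ with $r>0$) and then pass to the product $D_{\epsilon}=D_{1,\epsilon}\cdots D_{m,\epsilon}$. The details you supply beyond the paper's text --- the explicit logit computation giving resistance equal to the payoff gap $\Delta$, irreducibility at fixed $\epsilon>0$, and the no-cancellation argument showing each product entry is $\Theta(\epsilon^{\min_{p} s_{p}})$ rather than merely $O(\epsilon^{\min_{p} s_{p}})$ --- are precisely the steps the paper leaves implicit under ``it is easy to see'' and ``by the nature of the dynamics.''
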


\begin{definition}
A {\em tree rooted at node $j$} is a set $T$ of edges such that for
any state $w\neq j$ there exists a unique (directed) path from $w$ to
$j$. The {\em resistance of a rooted tree $T$} is the sum of resistances
of all edges in $T$.
\end{definition} 

The following characterization of stochastically stable states 
is presented as Lemma 3.2 in the Appendix of \cite{peyton-young-book}: 

\begin{proposition}\label{characterization}
Let $P^{\epsilon}$ be a regular perturbed Markov process, and for each
$\epsilon > 0$ let $\mu^{\epsilon}$ be the unique stationary distribution
of $P^{\epsilon}$. Then $\lim_{\epsilon \goesto 0} \mu^{\epsilon}
= \mu^{0}$ exists, and $\mu^{0}$ is a stationary distribution of
$P^{0}$. The stochastically stable states are precisely those states
$z$ such that there exists a tree rooted at $z$ of minimal resistance
(among all rooted trees).
\end{proposition}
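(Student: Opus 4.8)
The plan is to invoke the \emph{Markov Chain Tree Theorem} (the Freidlin--Wentzell / Matrix-Tree formula for finite chains), which expresses the stationary distribution of an irreducible chain entirely in terms of its spanning in-trees, and then to read off the $\epsilon\goesto 0$ behaviour directly from the resistance data. First I would recall the tree formula: since $P^{\epsilon}$ is irreducible for each $\epsilon>0$, it has a unique stationary distribution, given by
$$\mu^{\epsilon}(z)=\frac{q_{z}(\epsilon)}{\sum_{w\in\Omega}q_{w}(\epsilon)},\qquad q_{z}(\epsilon)=\sum_{T\text{ rooted at }z}\ \prod_{(x\goesto y)\in T}P^{\epsilon}_{xy},$$
where the (finite) sum ranges over all trees rooted at $z$ in the sense of the definition above, i.e. with every edge directed toward $z$, matching exactly the factors $P^{\epsilon}_{xy}$ appearing in the product.

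Next I would substitute the resistance asymptotics supplied by the definition of a regular perturbed process. Whenever $P^{0}_{xy}>0$ we have $P^{\epsilon}_{xy}=\Theta(\epsilon^{r(x\goesto y)})$; edges with $r=0$ contribute a positive constant factor, while transitions outside the support of $P^{0}$ vanish in the limit. Hence each tree contributes $\prod_{(x\goesto y)\in T}P^{\epsilon}_{xy}=\Theta(\epsilon^{R(T)})$, where $R(T)=\sum_{(x\goesto y)\in T}r(x\goesto y)$ is precisely the resistance of $T$ from the preceding definition. Writing $R(z)=\min\{R(T):T\text{ rooted at }z\}$, the lowest-resistance tree dominates the sum as $\epsilon\goesto 0$, so $q_{z}(\epsilon)=\Theta(\epsilon^{R(z)})$; and with $R^{*}=\min_{w}R(w)$ the denominator satisfies $\sum_{w}q_{w}(\epsilon)=\Theta(\epsilon^{R^{*}})$. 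Dividing yields
$$\mu^{\epsilon}(z)=\Theta\!\left(\epsilon^{\,R(z)-R^{*}}\right).$$

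From this single estimate I would extract all three conclusions. Since $R(z)-R^{*}\geq 0$ for every $z$, each coordinate $\mu^{\epsilon}(z)$ converges as $\epsilon\goesto 0$: to a strictly positive constant when $R(z)=R^{*}$, and to $0$ when $R(z)>R^{*}$. This simultaneously shows that $\mu^{0}=\lim_{\epsilon\goesto 0}\mu^{\epsilon}$ exists and that the stochastically stable states, namely those with $\underline{\lim}_{\epsilon\goesto 0}\mu^{\epsilon}(z)>0$, are exactly the roots $z$ admitting a minimum-resistance tree ($R(z)=R^{*}$), which is the asserted characterization. Finally, stationarity of $\mu^{0}$ under $P^{0}$ follows by continuity: passing to the limit in the identity $\mu^{\epsilon}P^{\epsilon}=\mu^{\epsilon}$ and using $P^{\epsilon}_{xy}\goesto P^{0}_{xy}$ gives $\mu^{0}P^{0}=\mu^{0}$.

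The main obstacle is the Markov Chain Tree Theorem itself, the only genuinely non-elementary ingredient; I would cite it rather than reprove it. Beyond that, the one point needing care is the bookkeeping of the hidden constants: one must verify that the implied constants in $q_{z}(\epsilon)=\Theta(\epsilon^{R(z)})$ are bounded away from $0$ and $\infty$ uniformly, so that the quotient of two $\Theta$-expressions really has order $\epsilon^{R(z)-R^{*}}$. This is automatic here because there are only finitely many states and finitely many trees, so every implied constant is drawn from a finite collection of positive reals.
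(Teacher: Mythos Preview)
The paper does not supply its own proof of this proposition: it is stated as a known result and attributed to \emph{Lemma 3.2 in the Appendix of} \cite{peyton-young-book}. Your argument via the Markov Chain Tree Theorem and the Freidlin--Wentzell resistance asymptotics is correct and is precisely the standard route used in that reference, so there is nothing to compare against here beyond noting that your write-up matches the canonical proof.

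One small wording point: as written in the paper's Definition~\ref{def:perturbed-markov}, resistances are attached to transitions with $P^{\epsilon}_{xy}>0$ (not $P^{0}_{xy}>0$), and transitions that survive in the $\epsilon\goesto 0$ limit should be assigned resistance $0$ rather than a strictly positive value; your proof implicitly uses this convention when you say ``edges with $r=0$ contribute a positive constant factor,'' so you may want to align your phrasing with it explicitly.
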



\begin{definition}
Given a graph $G$, a nonempty subset $S$ of vertices and a real number $0\leq r\leq 1/2$ we say that {\em $S$ is $r$-close-knit} if $
\forall S^{\prime}\subseteq S, S^{\prime}\neq \emptyset,~~ 
\frac{e(S^{\prime},S)}{\sum_{i\in S^{\prime}} deg(i)} \geq r$, 
where $e(S^{\prime},S)$ is the number of edges with one endpoint 
in $S^{\prime}$ and the other in $S$, and $deg(i)$ is the degree of vertex $i$. A graph $G$ is {\em $(r,k)$-close-knit} if every vertex 
is part of a $r$-close-knit set $S$, with $|S|=k$. 
\end{definition}

\begin{definition}
Given $p\in [0,1]$, the {\em $p$-inertia} of the process is the maximum,
over all states $x_{0}\in S$, of $W(\beta,p,x_{0})$,
the expected waiting time until at least $1-p$ of the population
is playing action $A$ conditional on starting in state $x_{0}$.
\end{definition}

The model in \cite{peyton-young-inovations,peyton-young-book} assumes independent individual
updates, arriving at random times governed (for each agent) by a Poisson arrival process with rate one. Since we are, however, interested in
adversarial models that do not have an easy description in continuous time
we will assume that the process proceeds in discrete steps. At each such step
a random node is scheduled. It is a simple exercise 
to translate the result in \cite{peyton-young-inovations,peyton-young-book}
to an equivalent one for global, discrete-time scheduling. The conclusions of this translation are: 
\begin{itemize} 
\item The stationary distribution of the process is {\em the Gibbs distribution}, $\mu_{\beta}(x)=\frac{e^{\beta\rho(x)}}{\sum_{z} e^{\beta\rho(x)}}$, where $\rho$ is the potential function of the dynamics.  
 \item ''All {\bf A}`` is the unique stochastically-stable state of the dynamics. 
\item Let $r^{*}=\frac{b-c}{a-d+b-c}$, and let $r>r^{*}$, $k>0$. 
On a family of $(r,k)$-close-knit graphs the convergence time is $O(n)$. 
\end{itemize}

\section{Results} 

First we note that Peyton Young's results easily extend to non-adaptive schedulers.  \emph{Adaptive} schedulers on the other hand, even those of fairness no
higher than that of the random scheduler, can preclude the system from ever entering a state where a proportion higher than $r$ of agents plays the risk-dominant strategy:

\begin{theorem}\label{adversarial-young-nonadaptive}
The following hold:  \\ 

(i) For {\em all} non-adaptive schedulers, the state ``all {\bf A}'' 
is the unique stochastically stable state of the system. 

(ii) Let ${\cal G}$ be a class of graphs that are $(r,k)$-close-knit for
some fixed $r>r^{*}$. Let $f=f(n)$ be a class of non-adaptive $\Theta(1)$
individually fair schedulers. Given any $p\in (0,1)$ there exists a
$\beta_{p}$ such that for all $\beta > \beta_{p}$ there exists a constant
$C$ such that the $p$-inertia of the process (under scheduling given
by $f$) is at most $C\cdot m\cdot n$, where $m=m(n)$ is the number of
rounds of $f$ and  $n$ is the number of vertices of the underlying graph.

(iii) For every $0<r<1$ there exists an {\emph{adaptive}} scheduler which is
$O(n\log(n))$-fair w.h.p. (where the constant hidden in the ``O'' notation
depends on $r$) that can forever prevent
the system, started on the ``all {\bf B}'s'' configuration, from ever 
having more than a fraction of $r$ of the agents playing 
${\bf A}$.
\end{theorem}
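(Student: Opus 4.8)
The plan is to handle the three parts separately, reusing the resistance machinery of Proposition~\ref{characterization} for (i), a close-knit local-convergence argument for (ii), and an explicit scheduler-luck construction for (iii).

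For part (i), I would argue that the stochastically stable states depend only on the resistances of transitions, and that these resistances are invariant under the choice of non-adaptive scheduler. By the Observation, for each distribution ${\cal D}_{i}$ the one-step resistance of a transition $k\goesto l$ counts the minimal number of non-best-reply (``mistake'') updates needed to realize it, a quantity fixed by the payoff matrix $M$ and independent of $i$. Since the non-adaptive scheduler cycles through a permutation $\pi$ of $\Sigma$ in which every edge lies in the support of some ${\cal D}_{i}$, the resistance of a transition over one round is obtained by adding these per-step resistances, and the reachability graph on $\overline{S}$ is the same as under random scheduling. Hence the minimum-resistance rooted tree is unchanged, and by Proposition~\ref{characterization} its root is still ``all {\bf A}''. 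The only point needing care is that concurrency (several nodes updating in one step) never lowers a resistance below its random-scheduling value; this follows because a concurrent best-reply update can be simulated by a sequence of single-node best-reply updates of the same total resistance.

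For part (ii), I would import Peyton Young's close-knit argument and track the scheduler's fairness. Fix a cover of $V$ by $r$-close-knit sets of size $k$. The role of $r>r^{*}$ is that once all members of such a set $S$ play {\bf A}, {\bf A} is the strict best reply for each of them regardless of the outside configuration, so $S$ is locked forever. For $\beta$ large, in any round the probability that a given close-knit set becomes entirely {\bf A} is bounded below by a positive constant, since $\Theta(1)$ individual fairness guarantees each of its $O(k)$ nodes is scheduled with probability $\Omega(1/|E|)$ during the round and the best-reply-to-{\bf A} updates carry resistance $0$. The number of rounds until a $(1-p)$-fraction of agents are locked at {\bf A} is then stochastically dominated by a sum of $O(n)$ geometric variables with constant success probability, i.e.\ $O(n)$ rounds w.h.p.; accounting for the $m$ scheduling steps that comprise one round of $\pi$ yields the claimed $C\cdot m\cdot n$ bound on the $p$-inertia.

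Part (iii) is the crux, and where I expect the main obstacle. I would exhibit an adaptive scheduler that wins a scheduler-luck game whose invariant is ``at most an $r$-fraction of agents play {\bf A}''. The scheduler imitates random scheduling so as to inherit the coupon-collector $O(n\log n)$ fairness, but it monitors the configuration and overrides its random choice whenever the invariant is threatened: when the {\bf A}-count sits at the cap $\lfloor rn\rfloor$ and the pending random update would create a new {\bf A}, it instead picks an existing {\bf A}-agent whose neighbourhood is {\bf B}-dominated and updates it, flipping it back to {\bf B} as a zero-resistance best reply (deterministically as $\beta\goesto\infty$). The delicate inductive invariant is not merely the count bound but also the perpetual existence of a flippable {\bf A}-agent, i.e.\ that {\bf A}-agents never consolidate into a cluster whose best reply is {\bf A}; I would enforce this by having the scheduler pre-emptively dismantle incipient clusters, updating boundary {\bf A}-agents back to {\bf B} before they reinforce one another. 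The hard part is reconciling these overrides with fairness: each override postpones the node the random process wanted to schedule, so I must show the overrides are rare and short, bounding their number per round by the number of spontaneous {\bf A}-creations (of order $\epsilon$ per step) and absorbing the postponed nodes into the fairness slack, thereby verifying the modified round length still meets the $O(n\log n)$-fair-w.h.p.\ definition. Formally this is packaged as a winning strategy in the scheduler-luck game, with monotonicity and composition of strategies gluing together the per-cluster defensive moves.
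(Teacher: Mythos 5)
Your plan diverges from the paper in all three parts, and in (i) and (ii) the divergence hides genuine gaps. For (i), the invariance you assert is not available at the level at which Proposition~\ref{characterization} would be applied. Under a non-adaptive scheduler the perturbed chain one must analyze is the \emph{round} chain (the step chain is time-inhomogeneous), and in a round every node gets scheduled; a scheduled node that stays put pays \emph{positive} resistance whenever flipping would have increased the potential. Hence the round-level resistance of a transition that flips a single node is not the single-step resistance seen under random scheduling, the two chains have different resistance landscapes, and ``the minimum-resistance rooted tree is unchanged'' is exactly what would need proof (note also that in this log-linear model resistances are potential differences, not mistake counts, and the per-step resistances $r_{i,k,l}$ in the Observation do depend on the distribution $i$, e.g.\ they are infinite when the needed node is outside $supp({\cal D}_{i})$). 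The paper avoids trees altogether: it checks detailed balance of the Gibbs measure $\mu_{\beta}$ for each single-node chain $M_{i}$, notes that each matrix $D_{k}$ is a convex combination of the $M_{i}$ and hence preserves $\mu_{\beta}$, and so does the product over a round; since the round chain is irreducible and aperiodic, its unique stationary distribution is Gibbs, which as $\beta \goesto \infty$ concentrates on the potential maximizer ``all {\bf A}''. For (ii), your key step --- ``in any round the probability that a given close-knit set becomes entirely {\bf A} is bounded below by a positive constant'' --- is false: $\Theta(1)$ individual fairness only gives each of the $k$ nodes probability $\Omega(1/n)$ of being scheduled in a given round, so even getting all of them scheduled once in a round has probability of order $n^{-k}$; moreover the flips that seed {\bf A} inside a {\bf B}-neighbourhood are mistakes of positive resistance, not zero-resistance best replies. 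Your geometric-sum bound therefore collapses (it would give on the order of $n^{k+1}$ rounds, not $O(n)$), and ``locked forever'' is false at finite $\beta$. The paper instead freezes the outside of each close-knit set $S_{v}$ at {\bf B}, obtaining a restricted potential game $\Gamma_{v,\beta}$ whose Gibbs stationary distribution concentrates on $A_{S}$; it shows via the projected chain $\Xi_{v,\beta}$ and Markov's inequality that after $\eta\cdot m\cdot n$ steps $\Gamma_{v,\beta}$ is near-stationary, couples this with the true process, and converts ``each vertex plays {\bf A} with probability $\geq 1-p^{2}$ at one fixed time'' into the $C\cdot m\cdot n$ inertia bound by a Markov-inequality argument on the expected {\bf A}-fraction.

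For (iii) you attack the right kind of object but leave exactly what you call ``the crux'' unproven, and the construction is more complicated than needed. Two concrete problems: the scheduler cannot test whether ``the pending random update would create a new {\bf A}'' before scheduling the node (it observes states, not the agent's future randomness), and your inductive invariant --- perpetual existence of a {\bf B}-dominated {\bf A}-agent, maintained by pre-emptive cluster dismantling --- is never established, while the fairness accounting of the overrides is left as a hope. Both difficulties evaporate in the paper's construction: work in rounds through a fixed permutation, and for a designated set of $\lceil rn\rceil+1$ nodes (relabelling $r$ as needed to match the statement) keep re-scheduling each one until it actually plays {\bf B}, scheduling every other node exactly once per round. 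The only probabilistic fact needed is that under the log-linear rule a scheduled agent switches to {\bf B} with probability at least a constant $\epsilon(\beta)>0$ \emph{whatever its neighbourhood looks like}, so no flippable-agent invariant or cluster surgery is required; each dwell is geometrically distributed, exceeds $c\log n$ schedulings with probability $o(1/n)$, so the round length is $O(n\log n)$ w.h.p., while at any moment at most one designated node can be playing {\bf A}, so the {\bf A}-fraction can never exceed the cap.
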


\begin{enumerate}
\item
 Let
$i\in V$ and let $M_{i}$ be the restriction of the given dynamics
corresponding to the case when only one node, node $i$, is
scheduled in all moves (otherwise the dynamics is similar to the
original one).

It is easy to see that $M_{i}$ is a non-ergodic Markov chain and that $\mu_{\beta}$ is {\em a}
stationary distribution for the Markov chain $M_{i}$. This is
so because for two configurations $x,y$ that only differ in position $i$,
the ratio of transition probabilities $p_{x,y}/p_{y,x}$ is equal to
to $exp[\beta \cdot (\rho^{*}(x)-\rho^{*}(y))]$, which is precisely $\mu_{\beta}(x)/\mu_{\beta}(y)$.

Now consider the matrix $D_{k}$ corresponding to the distribution
with the same notation as in the periodic schedule. It is a convex
combination of the matrices $M_{i}$, hence it will also have
$\mu_{\beta}$ as a stationary distribution. We infer that the
product of matrices $D_{k}$ corresponding to the cyclic schedule
also has $\mu_{\beta}$ as a stationary distribution.

But it is easy to see that the Markov chain corresponding to one
round of the cyclic schedule is irreducible (since one can
navigate between any two states in at most $|V|$ rounds, by
flipping the differing bits and keeping the bits that coincide
fixed) and aperiodic (since the probability of remaining in a
given state is positive). Therefore, it must have an unique
stationary distribution, which is necessarily $\mu_{\beta}$.

\item

Consider a vertex $v\in V$ and a $r$-close-knit set of size $k$ containing $v$,
denoted $S_{v}$. Consider $\Gamma_{v,\beta}$ the version of the process where
each vertex in $S_{v}$ updates just as before, but each vertex in $V\setminus S_{v}$
always chooses state $B$ when scheduled.

This restricted dynamics on $V$ still corresponds to a potential game,
specified by potential function
\[
\rho^{*}(x) = \sum_{(i,j)\in E} \rho(x_{i}, x_{j}),\mbox{ for }x\in \{A,B\}^{S_{v}}B^{V\setminus S_{v}},
\]

and with 
the Gibbs distribution 
$\mu^{\beta}(x)= \frac{ e^{\beta \cdot \rho^{*}(x)}}{\sum_{z}
e^{\beta \cdot \rho^{*}(z)} }$ as its stationary distribution. 
Again, just as in \cite{peyton:young:conventions} (since the precise scheduling
order does {\em not} play a role in this result) the condition that $G$ is
$(r,k)$-close-knit implies that the state $A_{S}$ defined as ``all {\bf A}'' on $S_{v}$ and ``all {\bf B}'' on $V\setminus S_{v}$  is the state with the
highest potential among the possible states of the system.

One additional complication of the dynamics $\Gamma_{v,\beta}$ is that
it often schedules (unnecessarily) nodes outside $S_{v}$, that do not change. Consider
$\Xi_{v,\beta}$ that is the version of $\Gamma_{v,\beta}$ that ``only schedules nodes in $S_{v}$'' (i.e. it ignores moves of $\Gamma_{v,\beta}$ that schedule nodes outside of $S_{v}$).

To describe this dynamics formally, view each distribution $D_{i}$
as a set of symbols from the alphabet $V$. Then the set of
trajectories of the dynamics $\Gamma_{v,\beta}$ can be specified
by the words of the regular language $L_{\Gamma} = (D_{1}\cdot D_{2}\cdot \ldots
\cdot D_{m})^{*}$. Trajectories of $\Xi_{v,\beta}$ correspond to words
in another regular language $L_{\Xi}$, more precisely to the ones
corresponding to deleting symbols in $V\setminus S_{v}$ from words
in $L_{\Gamma}$. This regular language can be specified by the
regular expression  $((D_{1}\cup\{\lambda\})\cdot \ldots \cdot (D_{m}\cup
\{\lambda\}) \cap S_{v}^{+})^{*}$. This expression yields a matrix
 of size $2^{|S_{v}|}\times 2^{|S_{v}|}$ for $\Xi_{v,\beta}$.

\vspace{5mm}
\begin{claim}\label{firstclaim} 
For every $\epsilon > 0$ there exists $\eta \in {\bf N}$ such that, for every $\eta^{\prime}> \eta \in {\bf N}$ any initial state of $\Gamma_{v,\beta}$ and every state
$T\in \{{\bf A,B}\}^{S_{v}}$.
\[
|Pr[\Gamma_{v,\beta}\mbox { in state }T\mbox{ } |\mbox{ } |w| = \eta^{\prime} \cdot m\cdot n]-\Pi(T)| \leq \epsilon.
\]

\end{claim}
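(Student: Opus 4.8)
The plan is to show that the dynamics $\Xi_{v,\beta}$, restricted to the states $\{{\bf A},{\bf B}\}^{S_v}$, is an ergodic (irreducible and aperiodic) finite Markov chain, so that it possesses a unique stationary distribution $\Pi$, and that the distribution of its state after $\eta' \cdot m \cdot n$ symbols converges to $\Pi$ as $\eta' \goesto \infty$, uniformly in the initial state. The claimed inequality is then nothing more than the standard convergence-to-stationarity statement for a finite ergodic chain, so the real content is establishing ergodicity of $\Xi_{v,\beta}$ and relating the length parameter $\eta' \cdot m \cdot n$ to the number of rounds actually seen.

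First I would argue that $\Xi_{v,\beta}$ has a well-defined transition matrix on the $2^{|S_v|}$ states: the regular expression $((D_1\cup\{\lambda\})\cdots(D_m\cup\{\lambda\}) \cap S_v^{+})^{*}$ exhibited just above the claim guarantees that deleting symbols outside $S_v$ from a trajectory of $\Gamma_{v,\beta}$ yields a genuine Markov process on $S_v$-configurations, because the update rule for a node depends only on its own neighbors and, within one round, the relative order of the surviving $S_v$-updates is preserved. Second I would establish irreducibility: for any two states $T_1,T_2 \in \{{\bf A},{\bf B}\}^{S_v}$, since each node in $S_v$ can with positive probability flip to either value when scheduled (the perturbation parameter $\beta$ is finite, so both candidate states have nonzero probability under $p^{\beta}(x_i \goesto z \mid \overline{x})$), one can reach $T_2$ from $T_1$ by flipping the differing coordinates one at a time, exactly as in the irreducibility argument used for part (i). Aperiodicity is immediate because the probability of a node re-selecting its current value is strictly positive, so every state has a self-loop.

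With ergodicity in hand, the Perron--Frobenius / convergence theorem for finite Markov chains yields a unique stationary distribution $\Pi$ and a bound of the form $\|P^{t}(T_0,\cdot) - \Pi\| \leq C\lambda^{t}$ for some $\lambda < 1$, uniformly over initial states $T_0$; choosing $t$ large enough that $C\lambda^{t} \leq \epsilon$ gives a threshold $\eta$ on the number of \emph{$\Xi_{v,\beta}$-steps}. The remaining link is to convert a bound on the number of $S_v$-updates into the stated bound on $|w| = \eta' \cdot m \cdot n$, the length measured in the original $\Gamma_{v,\beta}$ schedule. Here I would use that each round of the non-adaptive permutation $\pi$ has length $m$ and the full schedule over $n$ vertices contributes the factor $n$, so that $\eta'$ rounds of $\Gamma_{v,\beta}$ produce at least a comparable number of $S_v$-updates; invoking the $\Theta(1)$ individual-fairness hypothesis, each vertex of $S_v$ is scheduled with probability bounded below in each round, so $\eta'$ rounds yield $\Omega(\eta')$ effective $\Xi_{v,\beta}$-steps with the discrepancy absorbed into the choice of $\eta$.

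The main obstacle I expect is the bookkeeping in this last conversion step: one must ensure that ignoring the (unboundedly many, in principle) idle moves that schedule nodes outside $S_v$ does not disturb the convergence estimate, and that the \emph{induced} chain $\Xi_{v,\beta}$ truly has the Markov property after deletion of those symbols. Making precise that the number of genuine $S_v$-updates within $\eta' \cdot m \cdot n$ global steps exceeds the ergodicity threshold $\eta$ with the error controlled by $\epsilon$ — rather than merely in expectation — is where care is needed, and I would handle it by coupling the fairness guarantee with the uniform (initial-state-independent) nature of the geometric mixing bound.
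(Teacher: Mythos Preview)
Your proposal is correct and follows essentially the same route as the paper: ergodicity of the projected chain $\Xi_{v,\beta}$ on $\{{\bf A},{\bf B}\}^{S_v}$ gives convergence to a stationary distribution after enough $S_v$-updates, and individual fairness is used to guarantee that $\eta'\cdot m\cdot n$ global steps contain enough such updates. The paper carries out your ``bookkeeping'' step explicitly by conditioning on $|pr(w)|=j$, splitting the sum at the mixing threshold $\overline{k}$, bounding the tail $\Pr[|pr(Y)|<\overline{k}]\le \epsilon/2$ via the fairness lower bound and Markov's inequality (its Claim~2), and bounding the remaining terms by the $\epsilon/2$ mixing estimate for $\Xi_{v,\beta}$; this is exactly the decomposition you anticipated as the main obstacle.
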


\begin{proof}

Let $\epsilon > 0$. As $\Xi_{v,\beta}$ converges to its stationary
distribution, there exists $\overline{k}>0$ such that $\forall k^{\prime} > \overline{k}$ and every initial state of $\Xi_{v,\beta}$ 

\begin{equation}\label{approx-2}
|Pr[\Xi_{v,\beta}\mbox { in state }T | |w| = k^{\prime}] - \tilde{\Pi}(T)| \leq \epsilon/2,
\end{equation} 

where $\tilde{\Pi}$ is the stationary distribution of dynamics $\Xi$. Of course, states with positive support in $\tilde{\Pi}$ have the same probability
in $\Pi$, that is

\[
\forall T \in \{{\bf A,B}\}^{S_{v}}\mbox{: } \Pi(T) = \tilde{\Pi}(v).
\]

Let $Y$ be a random trajectory of length $\eta^{\prime}\cdot m\cdot n$ in $l_{\Gamma}$ and let $pr(Y)$ its projection onto $L_{\Xi}$.

\begin{claim}\label{approx} 
There exists $\eta > 0$ such that $\forall \eta^{\prime} > \eta$
\[
\Pr_{|Y|= \eta^{\prime}\cdot m \cdot n}[|pr(Y)| < \overline{k}] \leq \frac{\epsilon}{2}.
\]
\end{claim}

\begin{proof}
The probability that any
given distribution $D_{i}$ whose support includes some element in $S_{v}$
will schedule (in a given round) a node in this set is $\Omega(1/n)$, by the fairness condition. There is at least one such $D_{i}$ among all the $m$ distributions. Therefore, the expected length of $pr(Y)$ is $\Omega(k/n)$.  A simple application of Markov's inequality gives the desired result.
\end{proof}
\qedbox

Now write
\begin{eqnarray*}
& Pr & [\Gamma_{v,\beta}\mbox { in state }T | |w|  = \eta^{\prime} \cdot m\cdot n] \\  = & \sum_{j} & Pr[\Gamma_{v,\beta}\mbox { in state }T \mbox{ }|\mbox{ } |w| = \eta^{\prime} \cdot m\cdot n, |pr(w)|=j]\\ & \cdot & \Pr[|pr(w)|=j] \\ 
\end{eqnarray*}

Therefore we have 
\begin{eqnarray*}
& Pr & [\Gamma_{v,\beta}\mbox { in state }T | |w|  =  \eta^{\prime} \cdot m\cdot n] -\Pi (t)| \leq \\  \leq  & \sum_{j} & |Pr[\Gamma_{v,\beta}\mbox { in state }T \mbox{ }|\mbox{ } |w| = \eta^{\prime} \cdot m\cdot n, |pr(w)|=j]-\Pi(T)|\cdot \\ & \cdot &  \Pr[|pr(w)|=j] 
\end{eqnarray*}

The first term in the product is an absolute difference between
two probability values, and thus 
has absolute value at most one. Therefore, by Claim~\ref{approx}, if we neglect in the sum those terms with $j<k$ only changes the sum by at most $\epsilon/2$. On the other hand 
\begin{eqnarray*}
Pr[\Gamma_{v,\beta}\mbox { in state }T \mbox{ }|\mbox{ } |w| = \eta^{\prime} \cdot m\cdot n, |pr(w)|=j] \\
= Pr[\Xi_{v,\beta}\mbox { in state }T | |w| = k^{\prime}]. 
\end{eqnarray*}

Now, using Equation~(\ref{approx-2}),  Claim~\ref{firstclaim} follows.  
\end{proof}
\qedbox 

From now on the proof mirrors rather closely the one for the case of random scheduling (presented in the Appendix to \cite{peyton:young:conventions}): first, because the stationary distribution of process $\Gamma$ is the Gibbs distribution, there exists a finite value $\beta(\Gamma,S,p)$ such that $\mu_{v}(A_{S})\geq 1-p^{2}/2$ for all $\beta > \beta(\Gamma,S,p)$. 

There are only a finite number of nonisomorphic dynamical systems
$\Gamma_{v,\beta}$ (where isomorphism of dynamical system is meant to
be the isomorphism of the underlying graph topologies $S_{v}$ and of
the projection of schedulers onto $S_{v}$). In particular we can find
$\beta(r,k,p)$ and $\eta(r,k,p)$ such that, for all graphs $G\in {\cal
G}$ and all $r$-close-knit subsets $S$ with $k$ vertices, the following
relation holds for {\em all} initial states:

\begin{equation}
\forall \beta \geq \beta(r,k,p), \forall \eta^{\prime} \geq \eta, Pr[y^{\eta^{\prime}\cdot m \cdot n}= A_{S}] \geq 1-p^{2}.  
\end{equation}   

where $y^{t}$ is the state of the dynamical system on state $S_{v}$ at time $t$. We can now derive that for every close-knit set $S$
\begin{equation}
\forall \beta \geq \beta(r,k,p), \forall \eta^{\prime} \geq \eta, Pr[x^{\eta^{\prime}\cdot m \cdot n}= A_{S}] \geq 1-p^{2}.  
\end{equation}   

where $x_{t}$ is now the state of the process from the theorem. The
argument is obtained via essentially the same coupling as the one from
\cite{peyton-young-book}, hence it is omitted from this writeup. Since every vertex $i$
is contained in a $(r,k)$-close-knit set, it follows that

\[
\forall \beta \geq \beta(r,k,p), \forall \eta^{\prime} \geq \eta, Pr[x_{i}^{\eta^{\prime}\cdot m \cdot n}= {\bf A}] \geq 1-p^{2}.
\]

Therefore the expected proportion of vertices playing action $A$ at time $\eta^{\prime}\cdot m \cdot n$ is at least $(1-p^{2})n$. 

But this implies that 
\[
\forall t \geq \eta \cdot m \cdot n,  Pr[\mbox{ at least }(1-p)n\mbox{ nodes have label }{\bf A}] \geq 1-p. 
\]

Indeed, if this wasn't the case, then with probability at least $p$ more than 
$pn$ nodes at time $t$ would have label $B$, a contradiction. Now, by an application of Markov's inequality, the expected time until at least $(1-p)n$ nodes 
are labelled $A$ is at most $\eta\cdot m \cdot n/(1-p)$. Since this holds for 
all graphs $G$ in ${\cal G}$, the $p$-inertia of the process is bounded as stated in the theorem. 
\item 

Consider a scheduler working in rounds. In each round the scheduler is
scheduling nodes according to a fixed permutation $\pi$, the same for all rounds.
In each round the scheduler is scheduling each node at least once. For the
first $\lceil rn \rceil+1$ nodes the scheduler continues scheduling each of them
(after the initial one) until the node switches to strategy ${\bf B}$. The scheduler plays each remaining node {\em exactly} once.

It is easy to see that there exists a constant $\epsilon >0$ (that may depend on $\beta$) such that, at each stage, each agent switches to strategy ${\bf B}$
with probability greater or equal to $\epsilon$.

Therefore the probability that any given agent needs to be scheduled for
more than $c\log(n)$ rounds before turning to ${\bf B}$ is $o(1/n)$
for large enough $c$. It follows that the given scheduler is $O(n\log(n))$-fair w.h.p.
\end{enumerate}

\subsection{Main result: Diffusion of norms by contagion}

Adaptive schedulers can display two very different notions of adaptiveness: 
\begin{enumerate} 
 \item The next node depends only on the set of previously scheduled nodes, or \item It crucially depends on the {\em states} of the system so far. 
\end{enumerate}

The adaptive schedulers in Theorem~\ref{adversarial-young-nonadaptive} (iii) was crucially using the second, stronger, kind of adaptiveness. In the sequel we study a model that displays adaptiveness of type (1) but not of type (2). The model is specified as follows: To each node $v$
we associate a probability distribution 
$D_{v}$  on the vertices of $G$. We then choose the next scheduled node 
according to the following process. If $t_{i}$ is the
node scheduled at stage $i$, we chose $t_{i+1}$, the next
scheduled node, by sampling from $D_{t_{i}}$. In other words, the
scheduled node performs a (non-uniform) random walk on the vertices
of graph $G$. To exclude technical problems such as the
periodicity of this random walk, we assume that it is always the
case that $v\in supp(D_{v})$. Also, let $H$ be the directed graph with 
edges defined as follows: $(x,y)\in E[H] \iff (y \in supp(D_{x}))$. 
This dynamics generalizes both the class of non-adaptive schedulers
from previous result and the random scheduler (for the case when
$H$ is the complete graph). In the context of van Rooy's
evolutionary analysis of signalling games in natural language
\cite{vanrooy:horn}, it functions as a simplified model for an
essential aspect of emergence of linguistic conventions:
transmission via {\em contagion}.

It is easy to see that the dynamics can be described by an
aperiodic Markov chain $M$ on the set on $V^{\{{\bf A,B}\}} \times V$, where a
state $(\overline{w},x)$ is described as follows:
\begin{itemize}
\item $\overline{w}$ is the set of strategies chosen by the
agents.

\item $x$ is the label of the last agent that was given the chance to
update its state.
\end{itemize}

If the directed graph $H$ is strongly connected then the Markov chain $M$ is irreducible, hence it
has a stationary distribution $\Pi$. We will, therefore, limit ourselves in the sequel to settings with strongly connected $H$. We will, further, assume that the dynamics is {\em weakly reversible}, i.e. 
$(x\in supp(D_{y}))$ if and only if  $(y\in supp(D_{x}))$. This, of course, means that the graph $H$ is undirected. Note that since we do not constrain otherwise the transition probabilities of distributions $D_{i}$,  the stationary distribution $\Pi$ of the Markov chain
does {\em not}, in general, decompose as a product of component
distributions. That is, one cannot generally write $\Pi(w,x)$ as 
$\Pi(w,x) = \pi(w)\cdot \rho(x)$, for some distributions $\pi, \rho$.

\begin{theorem}\label{young-contagion} 
\label{rwalk} 
The set  $Q=\{(w,x)|w = V^{\bf A}\}$ is the set of stochastically stable states for the diffusion of norms by contagion.  
\end{theorem}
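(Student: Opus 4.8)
The plan is to apply the resistance-tree characterization of Proposition~\ref{characterization} to the perturbed chain $M$ and to show that every rooted tree of minimal resistance has its root inside $Q$. First I would record the resistance structure of $M$, extending the Observation to the combined state space $V^{\{{\bf A,B}\}}\times V$. With $\epsilon = exp(-\beta)$, a one-step transition from $(\overline{w},x)$ first moves the pointer to some $y\in supp(D_{x})$ and then updates node $y$. The pointer move occurs with the fixed, $\beta$-independent probability $D_{x}(y)$, so it contributes resistance $0$; all of the resistance of the transition comes from the log-linear strategy update at $y$ and equals the Peyton Young resistance of the corresponding strategy flip. The crucial bookkeeping fact is that the resistance of a transition of $M$ depends only on the strategy component, never on the pointer.

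Next I would identify the recurrent classes of the unperturbed chain $P^{0}$. Since $H$ is strongly connected, the pointer performs an irreducible walk, so for each strategy configuration $c$ that is a fixed point of the zero-resistance (best-reply) dynamics, the set $R_{c}=\{(c,x)\mid x\in V\}$ is a single recurrent class; $Q=R_{V^{\bf A}}$ and $R_{V^{\bf B}}$ are two of them, together with any mixed best-reply equilibria. I would then establish the detailed-balance relation on strategy flips: for configurations $c,c'$ differing in a single coordinate, the log-linear form gives $P^{\epsilon}_{c\to c'}/P^{\epsilon}_{c'\to c}= e^{\beta(\rho^{*}(c')-\rho^{*}(c))}$, whence $r(c\to c')-r(c'\to c)=\rho^{*}(c)-\rho^{*}(c')$. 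Because pointer moves are free and weak reversibility guarantees that $y\in supp(D_{x})$ implies $x\in supp(D_{y})$, every pointer step traversed along a path can be retraced in reverse at resistance $0$. Consequently the per-edge relation lifts to the contracted graph of recurrent classes: writing $\hat r$ for inter-class resistance, $\hat r(R_{c},R_{c'})-\hat r(R_{c'},R_{c})=\rho^{*}(c)-\rho^{*}(c')$.

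The selection then follows by tree surgery. Take a minimum-resistance rooted tree $T$ on the recurrent classes, rooted at some $R_{z}$. If $z$ is not all-{\bf A}, reverse the unique $T$-path from $Q$ to $R_{z}$; this yields a tree rooted at $Q$, and the telescoping of the relation above shows that its resistance differs from that of $T$ by exactly $\rho^{*}(z)-\rho^{*}(V^{\bf A})$. Risk dominance $a-d>b-c>0$ makes all-{\bf A} the unique global maximizer of the potential $\rho^{*}$ (this is precisely the Peyton Young selection recorded earlier for random scheduling), so that quantity is strictly negative, contradicting the minimality of $T$. Hence every minimum-resistance tree is rooted at $Q$; and since the states of $Q$ intercommunicate at resistance $0$ (strong connectivity of $H$ together with all-{\bf A} being absorbing), they are stochastically stable as a block, while no state outside $Q$ is. By Proposition~\ref{characterization}, $Q$ is exactly the set of stochastically stable states.

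The main obstacle I anticipate is the second step: lifting the detailed-balance relation $r(c\to c')-r(c'\to c)=\rho^{*}(c)-\rho^{*}(c')$ from single strategy flips to the inter-class resistances $\hat r$, precisely because the stationary distribution of $M$ does \emph{not} factor as $\pi(w)\cdot\rho(x)$. This is exactly where weak reversibility (the undirectedness of $H$) is indispensable: it ensures that the reversed strategy path can be realized by a legal, zero-resistance reversal of pointer moves, so the pointer contributes nothing to the resistance balance and the potential argument goes through despite the non-product form of $\Pi$.
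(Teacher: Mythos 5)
Your proposal is correct in substance and runs on the same engine as the paper's proof: reversing paths of the chain on $V^{\{{\bf A,B}\}}\times V$, using weak reversibility of $H$ (together with the standing assumption $v\in supp(D_{v})$, which lets the just-updated node be rescheduled) to retrace pointer moves, and letting the potential $\rho^{*}$ telescope so that a path and its reversal differ in resistance by exactly the potential difference of their endpoint configurations. The genuine difference is where the tree surgery happens. The paper works directly on states: it decomposes a minimum tree rooted at $x$ into the path $\pi_{y,x}$, the subtree at $y$, and the components hanging off $\pi_{y,x}$, and then needs a four-case analysis to reattach the hanging components --- necessary because reversing the path shifts the pointer coordinates, so the old attachment points $(x_{k},i_{k})$ need not lie on the reversed path $\Pi_{x,y}$. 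You instead contract to recurrent classes $R_{c}$ and reroot a class-level tree, which reduces the surgery to the standard ``reverse the path from $Q$ to the root'' operation with no reattachment cases, since class-nodes are not relabeled by the reversal; this is a real simplification. It does, however, leave two debts. First, Proposition~\ref{characterization} as quoted is a characterization via trees over \emph{states}, while your surgery and your inter-class resistances $\hat r$ live on the contracted graph; you need the standard Freidlin--Wentzell/Young equivalence between minimum state-trees and minimum class-trees (using that states within a recurrent class, and transient states, connect at zero resistance) to close the loop, and this should be stated rather than absorbed into the citation. Second, ``pointer moves are free'' is slightly too strong: a transition that moves the pointer to $y$ while $y$ keeps its strategy carries the resistance of that stay-put decision, which is positive whenever flipping $y$ would have raised the potential. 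Your telescoping survives because these stay-put resistances occur identically in both traversals of a path and cancel in $r(p)-r(\bar p)$ --- exactly the point the paper makes explicitly about its ``type (2)'' moves --- and since this cancellation is where the pointer-shift alignment of the reversed path is actually used, it deserves to be spelled out in your argument as well.
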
 

\begin{proof} 
The states in $Q$ are obviously reachable from one another by
zero-resistance moves, so it is enough to consider one state $y\in Q$
and prove that it is stochastically stable. To do so, by Proposition~\ref{characterization}, all we need to do is
show that $y$ is the root of a tree of minimal resistance.
Indeed, consider another state $x\in Q$ and let $T$ be a minimum
potential tree rooted at $x$. 
\begin{claim} 
There exists a tree
$\overline{T}$ rooted at $y$ having potential less or equal to the
potential of the tree $T$, strictly smaller in case 
$x$ is not a state having
all its first-component labels equal to $A$.
\end{claim} 

Let $
\pi_{y,x}= (x_{0},i_{0})\goesto (x_{1},i_{1})\goesto \ldots \goesto (x_{k},i_{k})\goesto (x_{k+1},i_{k+1})\goesto \ldots \goesto (x_{r},i_{r})$ 
be the path from $y$ to $x$ in $T$ (that is $(x_{0},i_{0})=y$, $(x_{r},i_{r})=x$). 

We will define $\overline{T}$ by viewing the set of edges of $T$ as partitioned into subsets of edges corresponding to  
paths as follows (see Figure~\ref{tree-fig} (a)): 

\begin{enumerate} 
\item The set of edges of path $\pi_{y,x}$. 
\item The set of edges of the subtree rooted at $y$. 
\item Edges of tree components (perhaps consisting of a single node) rooted
at a node of $\pi_{y,x}$, other than $y$ (but possibly being $x$).  
\end{enumerate} 

\begin{figure} 
\begin{center}
\includegraphics*[width=4.5cm]{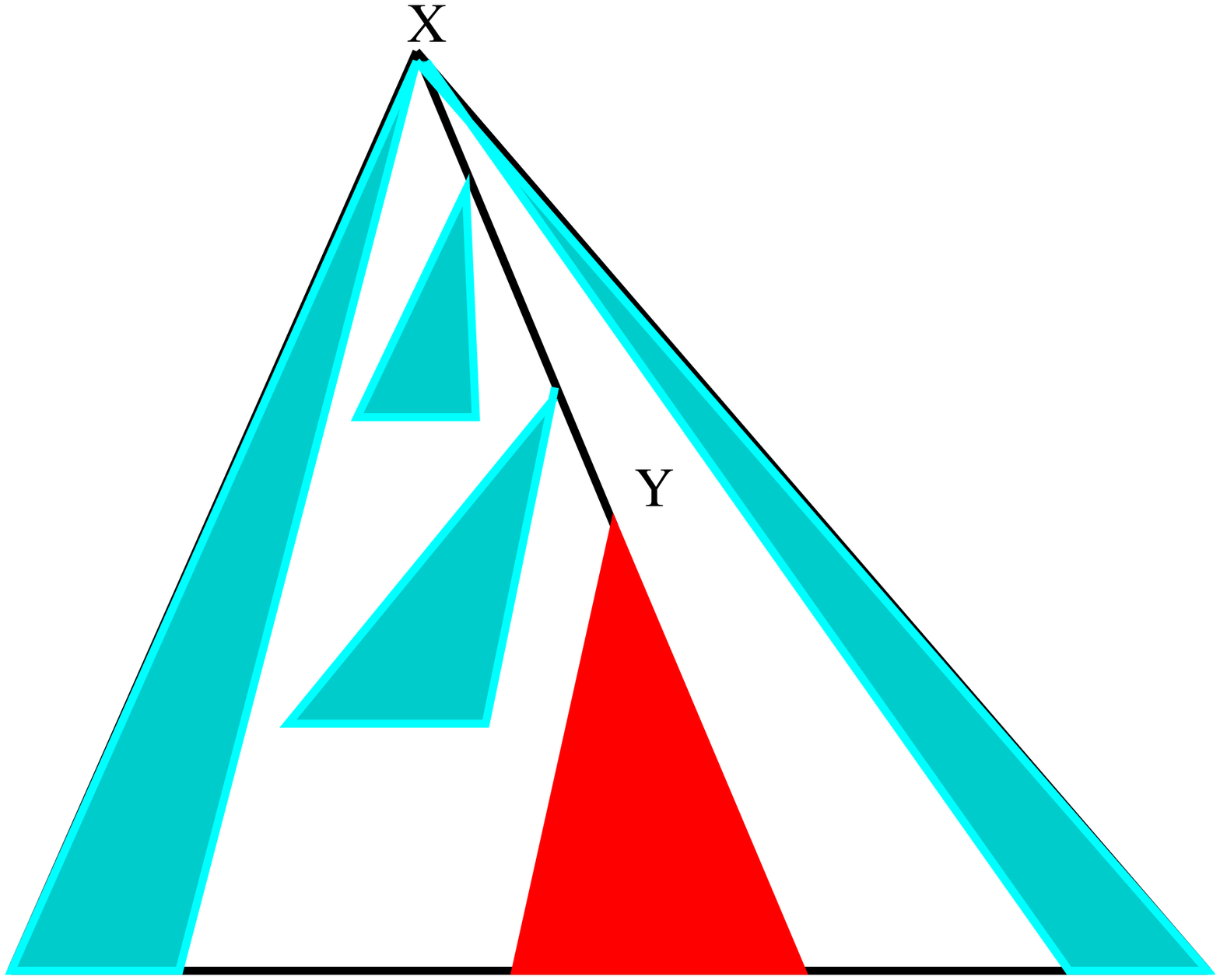}
\includegraphics*[width=4.5cm]{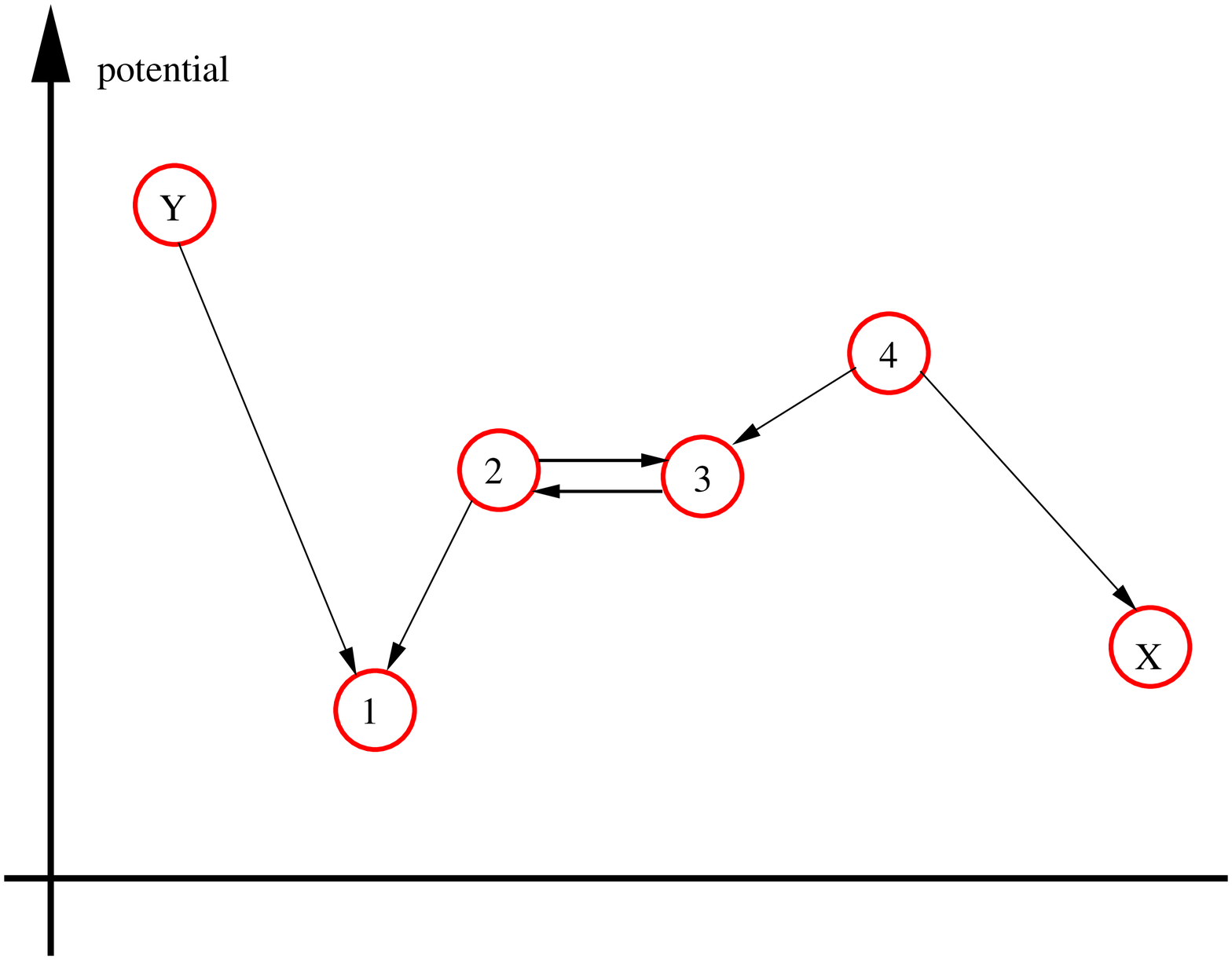}
\caption{(a). Decomposition of edges of tree $T$ (b). Resistance of edges on a path between two nodes $X$ and $Y$.}
\label{tree-fig}
\end{center} 
\end{figure}

\noindent

To obtain $\overline{T}$ we will transform each tree (path) in the above decomposition of $T$ into one that will be added to $\overline{T}$.  
The transformation goes as follows: 

\begin{enumerate}
\item Instead of path $\pi_{y,x}$ we add path $\Pi_{x,y}$ from $x$ to $y$ defined by: $\Pi_{x,y} = (x_{r},i_{r})\goesto (x_{r-1},i_{r})\goesto (x_{r-2},i_{r-1}) \goesto \ldots \goesto (x_{0},i_{1}) \goesto (x_{0},i_{0})=y$. 
\item Rooted trees of type (2) are included into tree $\overline{T}$ as well. 

\item The transformation is more complicated for the third type of edges, and we explain it in detail. Let $W_{k}$ be a tree component of $T$, connected to path $\pi_{y,x}$ at connection point $(x_{k},i_{k})$. 

{\bf Case 1:} {\em $x_{k}=x_{k-1}$}. Then the point $(x_{k},i_{k})= (x_{k-1},i_{k})$ belongs to path $\Pi_{x,y}$ as well, so one can just add the rooted tree $W_{k}$ to $\overline{T}$ as well.  

{\bf Case 2:} {\em $x_{k}\neq x_{k-1}$ and the move $(x_{k-1},i_{k-1})\goesto (x_{k},i_{k})$ has positive resistance}. In this case, since in configuration $x_{k-1}$ and scheduled node $i_{k}$ 
we have a choice of either moving to $x_{k}$ or staying in $x_{k-1}$, it 
follows that the move $(x_{k},i_{k})\goesto (x_{k-1},i_{k})$ has zero 
resistance. 

Therefore we can add to  $\overline{T}$ the tree $\overline{W_{k}}= W_{k}\cup \{(x_{k},i_{k})\goesto (x_{k-1},i_{k})\}$. The tree has the same resistance as the one of tree $W_{k}$. 

{\bf Case 3:} $x_{k-1}\neq x_{k}$ and the move $(x_{k-1},i_{k-1})\goesto (x_{k},i_{k})$ has zero resistance. 

Let $j$ be the smallest integer such that either $x_{k+j+1}=x_{k+j}$ or $x_{k+j+1}\neq x_{k+j}$ and the move $(x_{k+j},i_{k+j})\goesto (x_{k+j+1},i_{k+j+1})$ has positive resistance. 

In this case, one can first replace $W_{k}$ by $W_{k} \cup \{(x_{k},i_{k})\goesto (x_{k+1},i_{k+1})$, $(x_{k+1},i_{k+1})\goesto \ldots \goesto (x_{k+j},i_{k+j}) \}$ without increasing its total resistance. Then we apply one of the techniques from Case 1 or Case 2. 

{\bf Case 4:} $x_{k-1}\neq x_{k}$, the move $(x_{k-1},i_{k-1})\goesto (x_{k},i_{k})$ has zero resistance, and all moves on $\pi_{y,x}$, from $x_{k}$ up to $x$ have zero resistance. Then define $\overline{W_{k}}=W_{k}\cup (x_{k},i_{k})\goesto (x_{k+1},i_{k+1})\goesto \ldots \goesto x$.

\end{enumerate} 

It is easy to see that no two sets $W_{k}$ intersect on an edge having positive resistance. The union of the paths of all the sets is a
directed associated graph $W$ rooted at $y$, that contains a rooted tree $\overline{T}$ of potential no larger than the potential of $W$.  Since transformations in cases (i),(iii) do not increase tree resistance, to compare the potentials of $T$ and $W$
it is enough to compare the resistances of paths $\pi_{y,x}$ and
$\Pi_{x,y}$.

We come now to a fundamental property of the game $G$: since it is a potential 
game, the resistance $r(m)$ of a move 
$m= (a_{1},j_{1})\goesto (a_{2},j_{2})$ only 
depends on the values of the potential function at three points: 
$a_{1}, a_{2}$ and $a_{3}$, where $a_{3}$ is the state 
obtained by assigning node $j_{2}$ the value not assigned 
by move to $a_{2}$. Specifically, $r(m)>0$ if either $\rho^{*}(a_{2}) < \rho^{*}(a_{1})$, in which case 
$r(m)= \rho^{*}(a_{1})- \rho^{*}(a_{2})$, or 
\item $a_{2}=a_{1}$ and $\rho^{*}(a_{3}) > \rho^{*}(a_{1})$, in 
which case $r(m)= \rho^{*}(a_{3})- \rho^{*}(a_{1})$. In other words, the resistance of a move is positive in the following two cases:  (1)
The move leads to a decrease of the value of the potential function. In this case the resistance is equal to the difference of potentials. 
(2) The move corresponds to keeping the current state 
(thus not modifying the value of the potential function), 
but the alternate move would have increased the 
potential. In this case the resistance is equal to the value 
of this increase.  

Let us now compare the resistances of paths $\pi_{y,x}$ and
$\Pi_{x,y}$. First, the two paths contain no edges of infinite resistance,
since they correspond to possible moves under Markov chain dynamics
$P^{\epsilon}$. If we discount second components, the two paths
correspond to a single sequence of states $Z$ connecting $x_{0}$ to
$x_{r}$, more precisely to {\em traversing $Z$ in opposite directions}.
(The last move in $\Pi_{x,y}$ has zero resistance and can thus be
discounted). Resistant moves of type (2) are taken into account by both traversals, 
and contribute the same resistance value to both paths. 
So, to compare the resistances of the two paths it is enough 
to compare resistance of moves of type (1). Moves of type (1) of positive resistance are those that lead to a decrease 
in the potential function. Decreasing potential in one direction corresponds to increasing it in the other (therefore such moves have zero resistance in the opposite direction).

An illustration of the two types of moves is given in
Figure~\ref{tree-fig} (b), where the path between $X$ and $Y$ goes through
four other nodes, labeled   1 to 4. The relative height of each node
corresponds to the value of the potential function at that node. Nodes 2
and 3 have equal potential, so the transition between 2 and 3 contributes
an equal amount to the resistance of paths in both directions (which may be positive or not). Other than that only transitions of positive resistance are pictured.  

The conclusion of this argument is that $
r(\pi_{y,x})-r(\Pi_{x,y}) = \rho^{*}(x)-\rho^{*}(y) \geq 0$, 
and $r(\pi_{y,x})-r(\Pi_{x,y})> 0$ unless $x$ is an  ``all $A$'' state. 
\end{proof}

\subsection{The inertia of diffusion of norms with contagion} 

Theorem~\ref{rwalk} shows that random scheduling is not essential in ensuring that  stochastically stable states in Peyton Young's model correspond to all players playing $A$: the same result holds in the model with contagion. On the other hand, the result on the $p$-inertia of the process on families of close-knit graphs is {\em not} robust to 
such an extension. Indeed, consider the {\em line graph} $L_{2n+1}$ on $2n+1$ nodes labelled  $-n, \ldots, -1, 0, 1 \ldots n$. Consider a random walk model such that: (a) the origin of the random walk is node $0$, and (b) the walk goes left, goes right or stays in place, each with probability $1/3$. It is a well-known property of the random walk that it takes $\Omega(n^{2})$ 
time to reach nodes at distance $\Omega(n)$ from the origin. Therefore, the $p$-inertia of this random walk dynamics is $\Omega(n^{2})$ even though for every $r>0$ there exists a constant $k$ such that the family $\{L_{2n+1}\}$ is $(r,k)$-close-knit for large enough $n$. 

In the journal version of the paper we will present an upper bound on the $p$-inertia for the diffusion of norms with contagion based on concepts similar to the {\em blanket time} of a random walk \cite{blanket-time}. 

\section{Conclusions and Acknowledgments}

Our results have made the original statement by Peyton Young more robust, and have highlighted the (lack of) importance of various properties of the random scheduler in the results from \cite{peyton-young-book}: the {\em reversibility} of the random scheduler, as well as its inability to use the global system state are important in an adversarial setting, while its fairness properties are not crucial for convergence, only influencing convergence time. Also, the fact that the stationary distribution of the perturbed process is the Gibbs distribution (true for the random scheduler) does not necessarily extend to the adversarial setting. 

This work has been supported by the Romanian CNCSIS under a PN-II ``Idei'' Grant, by the U.S.\ Department of Energy under contract W-705-ENG-36 and  by NSF Grant CCR-97-34936.

\bibliographystyle{alpha}
\bibliography{/home/gistrate/bib/bibtheory}
\end{document}